\newtheorem{theorem}{Theorem}[section]
\newtheorem{lemma}[theorem]{Lemma}
\newtheorem{remark}{Remark}[section]
\newtheorem{definition}[theorem]{Definition}
\newtheorem{example}[theorem]{Example}
\newtheorem{assumption}[theorem]{Assumption}
\DeclareMathOperator*{\argmin}{arg\,min}
\begin{document}
\begin{frontmatter}

\title{Implications of Regret on  Stability of\\ Linear Dynamical Systems\thanksref{footnoteinfo}}

\thanks[footnoteinfo]{This work has been supported by the Swiss National Science Foundation under NCCR Automation (grant agreement $51\text{NF}40\_180545$),  and by the  European Research Council under the ERC Advanced grant agreement  $787845$ (OCAL).}

\author[First]{Aren Karapetyan}
\author[First]{Anastasios Tsiamis} 
\author[First]{Efe C. Balta}
\author[Second]{Andrea Iannelli}
\author[First]{John Lygeros}

\address[First]{Automatic Control Laboratory, ETH Zürich, 
   Zürich 8092, Switzerland (e-mails: \{akarapetyan, atsiamis, ebalta, jlygeros\}@control.ee.ethz.ch)}
\address[Second]{University of Stuttgart, Institute for Systems Theory and Automatic Control, Stuttgart 70569,  Germany\\(e-mail: andrea.iannelli@ist.uni-stuttgart.de)}

\begin{abstract}                
The setting of an agent making decisions under uncertainty and under dynamic constraints is common for the fields of optimal control, reinforcement learning, and recently also for online learning. In the online learning setting, the quality of an agent's decision is often quantified by the concept of regret, comparing the performance of the chosen decisions to the best possible ones in hindsight. While regret is a useful performance measure, when dynamical systems are concerned, it is important to also assess the stability of the closed-loop system for a chosen policy. In this work, we show that for linear state feedback policies and linear systems subject to adversarial disturbances, linear regret implies  asymptotic stability in both time-varying and time-invariant settings. Conversely, we also show that bounded input bounded state stability and summability of the state transition matrices imply linear regret.
\end{abstract}

\begin{keyword}
Regret, Linear Control Systems, Time-Varying Systems, Optimal Control
\end{keyword}

\end{frontmatter}

\section{Introduction}

A number of real-world problems can be cast into the framework of agents making optimal decisions under uncertainty and/or adversarial disturbances. In this setting, the agent has an associated dynamical system and at each timestep suffers an \emph{a priori} unknown cost that depends on its state and input. In the case of perfect knowledge of the dynamics and the future costs, this can be turned into an optimal control problem and solved with one of the plethora of available methods. As is often the case, however, the dynamics, disturbances, and/or future costs are either entirely unknown or only partially known. This is the setting, for example, in reinforcement learning and approximate dynamic programming \citep{bertsekas2015dynamic}. 

As agents make decisions ``on-the-go'', there is a need to quantify and compare the performance of various algorithms. Considering a closed-loop system with a given, possibly time-varying policy, one can study its asymptotic stability as an asymptotic metric. Another approach is to look into the problem through the lens of online optimization. An important metric in the literature of the latter is the notion of regret. Given a policy $\mu$, its regret $\mathcal{R}_T$, is defined  as the difference between its accumulated cost over some time horizon $T$ and that of some benchmark policy $\pi$. It is often desirable to have sublinear growth of $\mathcal{R}_T$ with respect to $T$, which will achieve average convergence to the benchmark in the limit.

A special case of the optimal control problem, the linear quadratic regulator (LQR) has been extensively studied in this context. The results in \citep{simchowitz2020naive} show that the certainty equivalence approach can synthesize stable linear state feedback controllers as long as the model estimate errors are small enough. In the same spirit of certainty equivalence, the algorithms proposed in \citep{cohen2018online, jedra2022minimal} yield sublinear regret bounds by sequentially solving Riccati equations. The LQR problem with adversarial disturbances has been studied in \citep{yu2020power, zhang2021regret} in the presence of a prediction window, in \citep{karapetyan2022regret} to quantify the regret of robustness in the sense of $\mathcal{H}_{\infty}$ control,  and in a regret-optimal setting in \citep{sabag2021regret} for unconstrained and in \citep{didier2022system, martin2022safe} for constrained cases. The setting with general convex and differentiable cost functions with adversarial disturbances has been studied for linear time-invariant (LTI) \citep{agarwal2019logarithmic} and for linear time-varying (LTV) \citep{gradu2020adaptive} systems; algorithms that achieve sublinear regret bounds have been proposed for both classes of systems.

To study and analyze algorithms in both the control theoretic and online learning contexts, one needs to understand the relationship between regret and stability. While most online learning-inspired works seek sublinear regret in pursuit of suboptimality guarantees, it is unclear whether good performance in this aspect also implies stability. Conversely, it is not known what the absence of such guarantees means even for linear dynamical systems. In a recent work \citep{nonhoff2022relation}, the authors study the interconnection between bounded regret and closed-loop stability for non-linear systems in the absence of noise. In the current work, we consider linear systems subject to process noise and study the relationship between linear regret and stability. Our goal is to develop sufficient conditions under which regret guarantees of an algorithm imply stability of the closed-loop system and vice versa.

In particular, we consider generic, time-varying costs, LTV systems subject to adversarial disturbances with LTV state feedback policies, and also LTI systems as a special case. We study the connection between the regret of such  policies and their closed-loop stability in the sense of bounded input-bounded state (BIBS) stability with respect to disturbances and  asymptotic stability in the absence of disturbances. Under suitable assumptions, we show the following:

\begin{enumerate}[(a)]
    \item For LTV systems and LTV state feedback policies linear regret implies asymptotic stability of the closed-loop system. Conversely, BIBS stability and absolute summability of the state transition matrices imply linear regret.
    \label{question:a}
    \item For LTI systems and LTI state feedback policies linear regret is attained if and only if the closed-loop system is asymptotically stable.
    \label{question:b}
\end{enumerate}

An essential requirement for the results to hold is the  ``observability" of the state with respect to costs. We provide a counterexample of a notable case where the asymptotic stability implication of linear regret fails to hold when this assumption is violated. A numerical example to showcase the result for the LTI case is provided.

 \textit{Notation}: For a square matrix $A$ the spectral radius and the spectral norm are denoted by $\rho(A)$, and  $\|A\|$, respectively. For a vector $w \in \mathbb{R}^{n}$, $\|w\|$ denotes its Euclidean norm. $\mathbb{R}_+$ is the set of positive real numbers and $\mathbb{N}$ that of non-negative integers. 

\section{Problem Setup}

\subsection{Preliminaries}
\label{sec:preliminaries}
Consider discrete-time LTV systems of the form
\begin{equation}
    x_{t+1} = A_tx_t +B_tu_t +w_t,
    \label{eq:linsys_tv}
\end{equation}
evolving over non-negative times $t\in\mathbb{N}$, where  $A_t \in \mathbb{R}^{n \times n}$, $B_t \in \mathbb{R}^{n \times m}$ are real matrices, and $x_t, w_t \in \mathbb{R}^n$, $u_t \in \mathbb{R}^m$ are, respectively, the state, disturbance and input signals at timestep $t$. The disturbances are treated as adversarial and are assumed to be norm bounded, that is, there exists  $W \in \mathbb{R}_+$ such that $\|w_t\|\leq W$ for all $t$. The objective of the optimal control problem is to minimize the cost function
\begin{equation}
    J_T(x_0,\boldsymbol{u};\boldsymbol{w}) = \sum_{t=0}^{T}c_t(x_t,u_t),
    \label{eq:cost_function}
\end{equation}
where $c_t(x_t,u_t):\mathbb{R}^n \times \mathbb{R}^m \rightarrow \mathbb{R}$ is the stage cost function at $t$ and $\boldsymbol{u} := [u_0^{\top} \hdots u_{T-1}^{\top}]^{\top}$, $\boldsymbol{w} := [w_0^{\top} \hdots w_{T-1}^{\top}]^{\top}$. 

\textbf{Benchmark Controller:} If the system matrices $(A_t,B_t)$ in \eqref{eq:linsys_tv}, the stage costs $\{c_t\}_{t=0}^T$  and the disturbance signal $\boldsymbol{w}$ are known at time $0$, then the control input signal $\boldsymbol{u^{\star}}$ that minimises \eqref{eq:cost_function} can be obtained by solving the following optimization problem
\begin{align*}
    &\argmin_{\boldsymbol{u}} \quad J_T(x_0,\boldsymbol{u};\boldsymbol{w})\\
     &\text{subject to} \quad \eqref{eq:linsys_tv} \quad \forall~ 0\leq t<T.
\end{align*}
Note that $\boldsymbol{u^{\star}}$ is referred to as the optimal input signal in hindsight, obtained with full knowledge of the uncertainty.

\textbf{Linear Policies}: In this work, we consider linear time-varying policies of the form\footnote{When the context is clear, we drop the subscript $t$ and/or the argument when referring to the linear policy $\mu_t(x_t)$.} $\mu_t(x_t) = -K_t x_t$ for some $K_t \in \mathbb{R}^{m\times n}$ for all $t\in \mathbb{N}$, where the matrices $K_t$ are independent of the state and input history. This covers the class of offline linear policies discussed in \citep{goel2022power}. Moreover, we note that the considered policies also cover the case where the time-varying parameters of the stage costs are revealed sequentially, thus generating $K_t$ online.  For example, consider the LQR problem with time-varying cost matrices $Q_t$, $R_t$, that depend only on time $t$ and are revealed in the form of predictions, studied in \citep{zhang2021regret}. The  future cost functions, disturbances, as well as the system matrices may be unknown to the policy.

For a given noise signal realisation $\boldsymbol{w}$ and cost functions $\{c_t\}_{t=0}^T$, we define the regret of a given policy $\mu$ to be
\begin{equation*}
    \mathcal{R}_T(\mu;\boldsymbol{w}):= J_T(x_0,\boldsymbol{u}^{\mu};\boldsymbol{w}) - J_T(x_0,\boldsymbol{u^{\star}};\boldsymbol{w}),
    \label{eq:regret_definition_general}
\end{equation*}
where $\boldsymbol{u}^{\mu}\in\mathbb{R}^{Tm}$ is the input signal generated by $\mu$.

Whereas the regret provides intuition on the performance of the given policy with respect to the best possible cost, its implication for the stability of the associated closed-loop system is unclear. Though several works, e.g. in the stochastic setting \citep{dean2018regret, cohen2018online}, also provide stability guarantees for the  online policies, there is no clear relationship between the order of regret and the notion of stability; the results below attempt to address this shortcoming. 

We start by defining linear regret to show the conditions under which it implies stability and vice versa.
\begin{definition}
    A policy $\mu$ is said to have linear regret if  for  given $W,X\in \mathbb{R}_+$, and class of stage costs \footnote{For example, the one defined later in Assumption \ref{assum:cost_functions}.}, there exist ${C}_w,{C}_0 \in \mathbb{R}_+$ such that for all $x_0 \in \mathbb{R}^n$ with $||x_0||\leq X$, 
    and for all admissible sequences $\{c_t\}_{t=0}^T$ and $\{w_t\}_{t=0}^{T-1}$ 
    \begin{equation}
        \mathcal{R}_T(\mu;\boldsymbol{w})\leq C_0 +C_wT, \quad \forall T>0.
        \label{eq:linear_regret}
    \end{equation}
    \label{def:linear_regret}
\end{definition}
In the rest of this section, we review some notions of stability and provide the assumptions required to show the points \ref{question:a} and \ref{question:b}. 

The LTV system in \eqref{eq:linsys_tv}  takes the following closed-loop form for considered policies $\mu_t(x_t) = -K_tx_t$
\begin{equation}
    x_{t+1} = F_tx_t + w_t,
    \label{eq:closed_loop_system}
\end{equation}
where $F_t = A_t - B_tK_t$. We refer to \eqref{eq:closed_loop_system} as the closed-loop system for the considered policy; for questions of asymptotic or exponential stability, we consider the  unforced version of \eqref{eq:closed_loop_system} with $w_t = 0$ for all $t$ \citep{ludyk2013stability}.

Let $\Phi(t,t_0) \in \mathbb{R}^{n \times n}$ denote the state transition matrix of the free system $x_{t+1} = F_tx_t$ at time $t$, starting at some  initial time $t_0$. In the following, we provide the definitions of asymptotic, exponential, and BIBS stability, specialized for LTV systems \citep{callier2012linear, ludyk2013stability}.

\begin{definition} 
(Asymptotic Stability). The LTV system \eqref{eq:closed_loop_system} is  asymptotically stable if and only if for all $x_0 \in \mathbb{R}^n$ and for all $t_0 \in \mathbb{N}$ 
\begin{enumerate}
\item There exists $b\in \mathbb{R}_+$, such that $\|\Phi(t,t_0)x_0\| \leq b \quad$ for all  $t \geq t_0$,
\item $\lim_{t\rightarrow\infty}\Phi(t,t_0)x_0 = \boldsymbol{0}$.\\
\end{enumerate}
\label{def:asymptotic_stability}
\end{definition}

\begin{definition}
(BIBS stability). The LTV system \eqref{eq:closed_loop_system} is BIBS stable if and only if
\begin{equation*}
    \sup_{t>0}\left[\sum_{k=1}^{t}\|\Phi(t,k)\|\right]< \infty.
\end{equation*}
\end{definition}
\begin{definition} (Exponential Stability). The LTV system \eqref{eq:closed_loop_system} is exponentially stable if and only if there exist  $\delta \in [0,1)$ and $d\in \mathbb{R}_+$, such that for all $t_0\in\mathbb{N}$ 
\begin{equation*}
    \|\Phi(t,t_0)\| \leq d \delta^{t-t_0} \quad \forall t\geq t_0.
\end{equation*}
 \label{def:exponential_stability}   
\end{definition}

We make the following assumptions on the dynamics.
\begin{assumption}
(System Dynamics).
\begin{enumerate}[label=\roman*.]
    \item There exists a policy  $\overline{\mu}_t(x_t) = -\overline{K}_tx_t$, such that the closed-loop system \eqref{eq:closed_loop_system} with $F_t = \overline{F}_t := A_t-B_t\overline{K}_t$ is exponentially stable. \label{assum:stabilizability}
    \item The state transition matrix for \eqref{eq:closed_loop_system} satisfies either of  the following, \label{assum:non_chaotic_systems}
\begin{enumerate}
    \item $\lim\limits_{T \rightarrow \infty}{\|\Phi(T,0)\|} \in\mathbb{R}\cup\{\infty\}$\\
    \item $0<\liminf\limits_{T \rightarrow \infty}{\|\Phi(T,0)\|} <\limsup\limits_{T \rightarrow \infty}{\|\Phi(T,0)\|}<\infty$.
\end{enumerate} 
\item The considered linear state feedback policies are such that the system matrix $F_t$ is full rank for all $t\in\mathbb{N}$. \label{assum:full_rank_system_matrix}
\end{enumerate}
\label{assum:system_dynamics}
\end{assumption}
We note that Assumption \ref{assum:system_dynamics}.\ref{assum:stabilizability} is required for performance guarantees of the benchmark policy and corresponds to the stabilizability in the LTI case. Assumption \ref{assum:system_dynamics}.\ref{assum:non_chaotic_systems} excludes chaotic LTV systems in the sense of Li-Yorke \citep{shi2009chaos} and is always satisfied for LTI systems. It can be relaxed, but we introduce it here to keep the discussion simple. Assumption \ref{assum:system_dynamics}.\ref{assum:full_rank_system_matrix} is standard in the discrete-time case to avoid deadbeat-type responses that imply a lack of uniqueness of the solutions backward in time.

Assumption \ref{assum:system_dynamics}.\ref{assum:full_rank_system_matrix} leads to the following lemma \citep{callier2012linear}.
\begin{lemma}
Assume that Assumption \ref{assum:system_dynamics}.\ref{assum:full_rank_system_matrix} holds, then the LTV system \eqref{eq:closed_loop_system} is  asymptotically stable if and only if
\begin{equation}
    \lim_{T \rightarrow \infty}\|\Phi(T,0)\|=0.
    \label{eq:transition_matrix_limit}
\end{equation}
\label{lem:asymptotic_stability}
\end{lemma}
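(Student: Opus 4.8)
The plan is to prove the two implications separately; the equivalence of all norms on the finite-dimensional space $\mathbb{R}^{n\times n}$ together with the invertibility of the transition matrices granted by Assumption \ref{assum:system_dynamics}.\ref{assum:full_rank_system_matrix} will do essentially all of the work.

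For necessity, I would assume \eqref{eq:closed_loop_system} is asymptotically stable and specialize Definition \ref{def:asymptotic_stability} to $t_0 = 0$ with $x_0$ ranging over the standard basis vectors $e_1,\dots,e_n$: condition~2 then forces each column $\Phi(T,0)e_j$ to converge to $\boldsymbol{0}$ as $T\rightarrow\infty$, and since all norms on $\mathbb{R}^{n\times n}$ are equivalent, columnwise convergence yields $\|\Phi(T,0)\|\rightarrow 0$, which is exactly \eqref{eq:transition_matrix_limit}. This direction will not even use Assumption \ref{assum:system_dynamics}.\ref{assum:full_rank_system_matrix}.

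For sufficiency, I would assume \eqref{eq:transition_matrix_limit} and reduce the case of a general initial time $t_0$ to that of initial time $0$. Since Assumption \ref{assum:system_dynamics}.\ref{assum:full_rank_system_matrix} makes every $F_t$ invertible, the product $\Phi(t_0,0) = F_{t_0-1}\cdots F_0$ is invertible, and the composition property of the transition matrix gives $\Phi(t,t_0) = \Phi(t,0)\,\Phi(t_0,0)^{-1}$ for all $t\geq t_0$. Because the convergent sequence $\{\|\Phi(T,0)\|\}_{T}$ is bounded, say by $b_0\in\mathbb{R}_+$, I would then bound $\|\Phi(t,t_0)x_0\| \leq b_0\,\|\Phi(t_0,0)^{-1}\|\,\|x_0\|$ uniformly in $t\geq t_0$ to obtain condition~1 of Definition \ref{def:asymptotic_stability}, and send $t\rightarrow\infty$ using $\|\Phi(t,0)\|\rightarrow 0$ to obtain condition~2; asymptotic stability follows.

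I do not expect a genuine obstacle here; the only subtle point is precisely the passage from $t_0=0$ to an arbitrary $t_0$ in the sufficiency direction, which is where invertibility of $\Phi(t_0,0)$ — hence Assumption \ref{assum:system_dynamics}.\ref{assum:full_rank_system_matrix} — is indispensable. Dropping it would allow deadbeat-type behaviour (e.g.\ some $F_t=0$) for which \eqref{eq:transition_matrix_limit} holds trivially, yet trajectories launched after the deadbeat step need not decay, so the system fails to be asymptotically stable in the sense of Definition \ref{def:asymptotic_stability}.
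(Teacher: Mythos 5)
Your proof is correct. Note that the paper does not actually prove this lemma --- it is stated as a known fact and attributed to the textbook of Callier and Desoer --- so there is no in-paper argument to compare against; your two-step argument (columnwise convergence for necessity, and the factorization $\Phi(t,t_0)=\Phi(t,0)\Phi(t_0,0)^{-1}$ enabled by Assumption 2.5.iii for sufficiency) is the standard one and is sound, including your observation that the full-rank hypothesis is only needed for the sufficiency direction and your deadbeat counterexample showing it cannot be dropped, which matches the paper's own stated motivation for that assumption.
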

We also introduce the following assumption for the costs.

\begin{assumption}
\label{assum:cost_functions}
(Stage Costs). There exist positive $\overline{M},\underline{M}\in \mathbb{R}_+$ and $\overline{s},\underline{s} \geq 1$, such that\footnote{We consider throughout $\underline{s} = \overline{s} = 2$ without loss of generality.} for all $  x \in\mathbb{R}^n, u\in \mathbb{R}^m$  and $t\in\mathbb{N}$,\\
$$\underline{M}\|x\|^{\underline{s}}\leq c_t(x,u) \leq \overline{M}(\|x\|^{\overline{s}}+\|u\|^{\overline{s}}).$$
\end{assumption}
 
 The lower bound in the above assumption, which we refer to as the ``observability" of the state with respect to cost, excludes cases when unstable states can be ``hidden" in the cost. This condition is common in the literature \citep{bertsekas2015value} and is also referred to as positive definiteness or detectability of the system with respect to stage costs \citep{postoyan2016stability}. 


\subsection{Relation to Tracking Problems}
For completeness, we also point out a connection between the stability discussion below and tracking problems. Consider the problem of tracking an unknown, bounded,  time-varying reference signal $r_t\in \mathbb{R}^n$ given system dynamics \eqref{eq:linsys_tv}, where the agent has access to $r_t$ at timestep $t$, but not before. We can then write the state evolution of the tracking error $\Tilde{x}_t := x_t-r_t$ as 
\begin{equation*}
    \Tilde{x}_{t+1} = A\Tilde{x}_t + Bu_t + \underbrace{w_t - r_{t+1} +Ar_t}_{\nu_t},
\end{equation*}
where $\nu_t \in \mathbb{R}^n$ can be considered as the disturbance of the modified system; note that $r_{t+1}$ and $w_t$ are unknown at time $t$. If the assumptions above are satisfied the tracking problem can be treated as the regulation under the adversarial noise problem in Section \ref{sec:preliminaries}.

\section{Main Results}

In this section, we derive our main results for LTV and LTI systems. We start with the following motivating example of discounted LQR as a particular case when an unstable system attains linear regret if Assumption \ref{assum:cost_functions} is violated. 

\begin{example}
Consider the dynamics \eqref{eq:linsys_tv}, with $A_t =A$, $B_t=B$ for all $t\in \mathbb{N}$, and the following cost function
\begin{equation}
     J_T^d(x_0,\boldsymbol{u};\boldsymbol{w}) = x_T^{\top}P_{\alpha}x_T + \sum_{i=0}^{T-1}\alpha^i\left(x_i^{\top}Qx_i +u_i^{\top}Ru_i\right),
     \label{eq:discounted_cost}
\end{equation}
where $\alpha \in (0,1)$ is a discount factor, $Q \in \mathbb{R}^{n \times n}$, $R \in \mathbb{R}^{m \times m}$ are positive definite matrices, and $P_{\alpha} \in \mathbb{R}^{n \times n}$ satisfies the following equation
\begin{equation*}
    P_{\alpha} = Q + \alpha A^{\top}P_{\alpha}A  - \alpha^2A^{\top}P_{\alpha}B(R+\alpha B^{\top}P_{\alpha}B)^{-1}B^{\top}P_{\alpha}A.
\end{equation*}
The above is the discrete algebraic Riccati equation (DARE) for the modified system $(\sqrt{\alpha}A,B)$ with cost matrices $Q$ and $\frac{R}{\alpha}$. The stage costs do not satisfy the condition in Assumption \ref{assum:cost_functions} as there exists no uniform lower bound. Consider now the certainty equivalent controller that minimises the cost \eqref{eq:discounted_cost} assuming $w_t = \boldsymbol{0} \quad \forall t\in\mathbb{N}$. The optimal policy $\mu_t$ in this setting is then a linear state feedback, given by $\mu_t = -K_{\alpha}x_t$ \citep{bertsekas2015dynamic}, where
\begin{equation*}
    K_{\alpha} =  -\alpha(R+\alpha B^{\top}P_{\alpha}B)^{-1}B^{\top}P_{\alpha}A.
\end{equation*}
We make the hypothesis that the value function (cost-to-go) at time step $t$ associated with this policy is given by 
\begin{equation*}
    V_t(x_t)= \alpha^t\left[x_t^{\top}P_{\alpha}x_t + v_t^{\top}x_t + q_t\right],
\end{equation*}
where $v_t\in\mathbb{R}^n$, $q_t \in \mathbb{R}$. At timestep $T$, it is easily verified with $v_T=\boldsymbol{0}$, $q_T=0$. Assuming the hypothesis is true for $t+1$, the cost-to-go at $t$ is
\begin{align*}
    V_t(x_t) &= \alpha^t[x_t^{\top}(Q+K_{\alpha}^{\top}RK_{\alpha})x_t+ \alpha x_t^{\top}F_\alpha^{\top}P_{\alpha}F_\alpha x_t\\ 
    &+2\alpha x_t^{\top}\left(F_{\alpha}^{\top}P_{\alpha}w_t+\frac{1}{2}v_{t+1}\right)\\
    &+ \alpha\left(w_t^{\top}P_{\alpha}w_t+w_t^{\top}v_{t+1}+q_{t+1}\right)].
\end{align*}
One can verify that the induction hypothesis is verified only if $P_{\alpha}$ satisfies the DARE for the modified system, and
\begin{align*}
    v_t &= 2\alpha F_{\alpha}\left(P_{\alpha}w_t + \frac{1}{2}v_{t+1}\right)\\
    q_t &= \alpha \left( w_t^{\top}P_{\alpha}w_t + w_t^{\top}v_{t+1} + q_{t+1} \right).
\end{align*}
The cost under this linear state feedback policy can then be attained  by applying the preceding recursion repeatedly
\begin{equation*}
J_T^d(x_0,\boldsymbol{u}^{\mu};\boldsymbol{w}) = x_0^{\top}P_{\alpha}x_0 + x_0^{\top}v_0 + q_0,
\end{equation*} 
where
\begin{align*}
    v_0 &= 2 \sum_{k=0}^{T-1}\alpha^{k+1}\left(F_{\alpha}^{\top}\right)^{k+1}P_{\alpha}w_k,\\
    q_0 &= \sum_{k=0}^{T-1}\alpha^{k+1}\left(w_k^{\top}P_{\alpha}w_k + \sum_{i=1}^{T-1}\left(\alpha^i\left(F_{\alpha}^{\top}\right)^i P_{\alpha}w_i\right)\right),
\end{align*}
and $F_{\alpha}:= A-BK_{\alpha}$. It is known that for $\alpha$ small enough the closed-loop system matrix $F_{\alpha}$ may in fact be unstable \citep{postoyan2016stability}. Consider the set,
\begin{equation*}
    \Gamma := \{\alpha \in (0,1)\:|\:\alpha \|F_{\alpha}\|<1, \rho(F_{\alpha})>1\}.
\end{equation*}
Then for some $\bar{\alpha} \in \Gamma$, and for $\sigma:= \bar{\alpha}\|F_{\alpha}\|$
\begin{equation*}
\begin{split}
     J_T^d(x_0,\boldsymbol{u}^{\mu};\boldsymbol{w}) &\leq \|x_0\|^2\|P_{\bar{\alpha}}\| + 2\|x_0\|\|P_{\bar{\alpha}}\|W\frac{\sigma(1-\sigma^T)}{1-\sigma}\\
     &+ W^2\|P_{\bar{\alpha}}\|\frac{\bar{\alpha} (1-\bar{\alpha}^T)}{1-\bar{\alpha}} + 2\|P_{\bar{\alpha}}\|\frac{\sigma}{1-\sigma}WT.
     \end{split}
\end{equation*}
It is then evident that there exist $C_w, C_0 \in \mathbb{R}_+$, such that
\begin{equation*}
J_T^d(x_0,\boldsymbol{u}^{\mu};\boldsymbol{w}) \leq C_0 + C_wT.
\end{equation*}
Moreover, such $C_0$ and $C_w$ exist for all  costs of the form \eqref{eq:discounted_cost}. Hence, for all such $\{c_t\}_{t=0}^T$, the regret will also attain the same bound. Thus, while the considered policy achieves linear regret, the closed-loop with the system is unstable. 
\end{example}
This example shows how the violation of Assumption \ref{assum:cost_functions} can make regret hide   the instability of the system.

\subsection{Linear Time-Varying Systems}

The following theorem establishes the regret-stability relation for the time-varying policy $\mu_t(x_t) = -K_tx_t$.

\begin{theorem}
Assume that Assumption \ref{assum:cost_functions} holds. Given the cost function \eqref{eq:cost_function} and LTV system \eqref{eq:linsys_tv}, consider a linear time-varying state feedback policy $\mu_t(x_t) = -K_tx_t$. If the closed-loop system \eqref{eq:closed_loop_system} under this policy is BIBS stable, and if there exists $D \in \mathbb{R}_+$ such that
\begin{equation}
    \lim_{T \rightarrow \infty}\sum_{t=0}^T\|\Phi(t,0)\|\leq D,
    \label{eq:summable_sum_state_transition}
\end{equation}
then the policy $\mu$ attains linear regret.

\label{the:summable_transiton_linear_regret}
\end{theorem}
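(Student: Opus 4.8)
The plan is to bound the policy's cost $J_T(x_0,\boldsymbol{u}^{\mu};\boldsymbol{w})$ from above by something linear in $T$, and then to note that the benchmark cost $J_T(x_0,\boldsymbol{u}^{\star};\boldsymbol{w})$ is nonnegative (by the lower bound in Assumption~\ref{assum:cost_functions}, since $\underline M\|x_t\|^2 \ge 0$), so that the regret $\mathcal R_T = J_T(x_0,\boldsymbol{u}^{\mu};\boldsymbol{w}) - J_T(x_0,\boldsymbol{u}^{\star};\boldsymbol{w}) \le J_T(x_0,\boldsymbol{u}^{\mu};\boldsymbol{w})$ inherits the same linear bound. So the entire task reduces to a linear-in-$T$ upper bound on the closed-loop cost.

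First I would write the closed-loop state trajectory explicitly via the state transition matrix: for the system \eqref{eq:closed_loop_system}, $x_t = \Phi(t,0)x_0 + \sum_{k=1}^{t}\Phi(t,k)w_{k-1}$. Taking norms and using $\|x_0\|\le X$, $\|w_k\|\le W$ gives $\|x_t\| \le \|\Phi(t,0)\|X + W\sum_{k=1}^{t}\|\Phi(t,k)\|$. BIBS stability gives a uniform bound $\sum_{k=1}^{t}\|\Phi(t,k)\| \le \bar\Phi$ for all $t$, and condition \eqref{eq:summable_sum_state_transition} gives $\|\Phi(t,0)\| \le D$ for all $t$ (each term of a convergent nonnegative series is bounded by $D$). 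Hence $\|x_t\| \le DX + W\bar\Phi =: \beta$, a constant independent of $t$ and $T$.

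Next I would bound the input: since $u_t = \mu_t(x_t) = -K_t x_t$, I need $\|K_t\|$ bounded uniformly in $t$. This is where I expect the main obstacle: the theorem statement does not explicitly assume $\sup_t\|K_t\|<\infty$, yet without it $u_t$ (and hence $c_t(x_t,u_t)$ through the upper bound $\overline M(\|x\|^{\overline s}+\|u\|^{\overline s})$) cannot be controlled. I would resolve this either by observing that BIBS stability together with Assumption~\ref{assum:system_dynamics}.\ref{assum:stabilizability} and the structure $F_t = A_t - B_tK_t$ forces such a bound under the paper's standing hypotheses (e.g. bounded $A_t,B_t$), or by reading it as an implicit part of the admissibility of the policy class; in the writeup I would state the uniform bound $\|K_t\|\le \bar K$ explicitly and justify it from the ambient assumptions. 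Granting it, $\|u_t\|\le \bar K\beta$, so each stage cost satisfies $c_t(x_t,u_t) \le \overline M(\beta^2 + \bar K^2\beta^2) =: \gamma$, a constant.

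Finally, summing over $t=0,\dots,T$ yields $J_T(x_0,\boldsymbol{u}^{\mu};\boldsymbol{w}) \le (T+1)\gamma = \gamma + \gamma T$, so with $C_0 := \gamma$ and $C_w := \gamma$ we get $\mathcal R_T(\mu;\boldsymbol{w}) \le J_T(x_0,\boldsymbol{u}^{\mu};\boldsymbol{w}) \le C_0 + C_w T$ for all $T>0$. Since $\gamma$ depends only on $X$, $W$, $D$, $\bar\Phi$, $\bar K$ and the cost constants $\overline M,\overline s$ — and not on the particular admissible $\{c_t\}$ or $\{w_t\}$ — the constants $C_0,C_w$ are uniform over all admissible sequences, which is exactly what Definition~\ref{def:linear_regret} requires. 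This completes the proof.
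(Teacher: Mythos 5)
Your proposal is correct and follows essentially the same route as the paper: express the closed-loop state through the transition matrix, use the upper bound in Assumption~\ref{assum:cost_functions} to bound $J_T(x_0,\boldsymbol{u}^{\mu};\boldsymbol{w})$ linearly in $T$, and conclude via nonnegativity of the benchmark cost. The one cosmetic difference is that you bound $\|x_t\|$ uniformly by a constant (using $\|\Phi(t,0)\|\leq D$ termwise and the BIBS bound) and then sum, whereas the paper keeps the sums $\sum_t\|\Phi(t,0)\|^2$ and $\sum_t\sum_k\|\Phi(t,k)\|^2$ intact to separate the $C_0$ and $C_wT$ contributions; both yield the required bound. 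Your flagged concern about $\sup_t\|K_t\|$ is legitimate and well placed: the paper's own proof absorbs this into the constant $M=\overline{M}(1+\max_{0\leq t<T}\|K_t\|^2)$, which as written depends on $T$ unless the gains are uniformly bounded, so the paper is making the same implicit assumption you make explicit. Your suggestion to derive it from BIBS stability (which bounds $\|F_t\|=\|\Phi(t+1,t)\|$) plus boundedness of $A_t,B_t$ would need $B_t$ to have singular values bounded away from zero, so stating $\sup_t\|K_t\|\leq\bar K$ as an explicit hypothesis is the cleaner fix.
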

\begin{proof}
Using the condition in Assumption \ref{assum:cost_functions} and defining $M: = \overline{M}(1+\mathop{\max}_{0\leq t<T}\|K_t\|^2)$
\begin{align*}
    &J_T(x_0,\boldsymbol{u}^{\mu};\boldsymbol{w}) = \sum_{t=0}^Tc_t(x_t,K_tx_t) \leq M\sum_{t=0}^T\|x_t\|^2\\
    & = M\sum_{t=0}^T\left\|\Phi(t,0)x_0 + \sum_{k=1}^t\Phi(t,k)w_{k-1}\right\|^2\\
    & \leq 2M\|x_0\|^2\sum_{t=0}^T\|\Phi(t,0)\|^2 + 2MW^2\sum_{t=0}^T\sum_{k=1}^t\|\Phi(t,k)\|^2\\
    & \leq 2M\overline{D}X^2 + 2M\overline{H}W^2T,
\end{align*}
where $\overline{D},\overline{H} \in \mathbb{R}_+$ are such that
\begin{align}
\label{eq:summable_state_transition_s}
    \lim_{T \rightarrow \infty}\sum_{t=0}^T\left\|\Phi(t,0)\right\|^2\leq \overline{D} < \infty,\\
    \label{eq:BIBS_upper_bound_s}
    \lim_{t \rightarrow \infty}\sum_{k=1}^t\|\Phi(t,k)\|^2\leq \overline{H} < \infty.
\end{align}
The limit in \eqref{eq:summable_state_transition_s} exists due to the condition in \eqref{eq:summable_sum_state_transition} and the fact that the series contains only non-negative terms. The limit in \eqref{eq:BIBS_upper_bound_s} exists due to the BIBS assumption and the series of non-negative terms only. Since $J_T(x_0,\boldsymbol{u};\boldsymbol{w})\geq 0$ for all $\boldsymbol{u}$, the regret will necessarily attain the same bound with $C_w = 2M\overline{H}W^2$ and $C_0 = 2M\overline{D}X^2$.
\end{proof}

We note that given Assumption \ref{assum:system_dynamics}.\ref{assum:full_rank_system_matrix} holds, the condition in \eqref{eq:summable_sum_state_transition} is  stronger than  asymptotic stability. In fact, if \eqref{eq:summable_sum_state_transition} is satisfied, then Assumption \ref{assum:system_dynamics}.\ref{assum:full_rank_system_matrix} implies asymptotic stability of the closed-loop system.

The following theorem establishes the implication of linear regret on the stability of the closed-loop system with linear time-varying state feedback policies.
\begin{theorem}
Assume that Assumptions \ref{assum:system_dynamics}  and \ref{assum:cost_functions} hold. Given the cost function \eqref{eq:cost_function} and LTV system \eqref{eq:linsys_tv}, if a linear time-varying  state feedback policy $\mu_t(x_t) = -K_tx_t$ attains linear regret, then the closed-loop system \eqref{eq:closed_loop_system} under this policy is asymptotically stable.
\label{the:regret_implying_stability_tv}
\end{theorem}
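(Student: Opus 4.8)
The plan is to argue by contraposition: assuming the closed loop \eqref{eq:closed_loop_system} is \emph{not} asymptotically stable, I will exhibit admissible initial conditions and disturbance sequences along which the regret grows superlinearly, so that $\mu$ cannot attain linear regret. By Lemma~\ref{lem:asymptotic_stability} (valid under Assumption~\ref{assum:system_dynamics}.\ref{assum:full_rank_system_matrix}), failure of asymptotic stability means $\|\Phi(T,0)\|\not\to 0$. The first ingredient is a uniform bound on the benchmark: applying the exponentially stabilizing policy $\overline\mu$ from Assumption~\ref{assum:system_dynamics}.\ref{assum:stabilizability} keeps the closed-loop state uniformly bounded whenever $\|x_0\|\le X$ and $\|w_t\|\le W$, so $J_T(x_0,\boldsymbol u^{\star};\boldsymbol w)\le \gamma\,(T+1)$ for some $\gamma=\gamma(X,W)$ independent of $T$. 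Combining this with the lower bound $c_t(x,u)\ge\underline M\|x\|^2$ of Assumption~\ref{assum:cost_functions} and the linear-regret hypothesis, one gets that $\sum_{t=0}^{T}\|x_t^{\mu}\|^2\le a+bT$ holds for all $\|x_0\|\le X$, all $\|w_t\|\le W$ and all $T$, with $a,b$ independent of $T$; the rest of the argument contradicts this ``uniform linear energy'' estimate.

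I would then split according to Assumption~\ref{assum:system_dynamics}.\ref{assum:non_chaotic_systems}. In the case $\|\Phi(T,0)\|\to\infty$, take $\boldsymbol w=0$ and run $\mu$ from each of the $n$ initial states $x_0=Xe_i$, so $x_t^{\mu}=X\Phi(t,0)e_i$; summing the energy estimate over $i$ and using $\sum_i\|\Phi(t,0)e_i\|^2=\|\Phi(t,0)\|_F^2\ge\|\Phi(t,0)\|^2$ gives $\sum_{t=0}^{T}\|\Phi(t,0)\|^2\le a'+b'T$. Since $\|\Phi(t,0)\|^2\to\infty$, the Stolz--Ces\`aro theorem yields $\tfrac1{T+1}\sum_{t=0}^{T}\|\Phi(t,0)\|^2\to\infty$, contradicting the linear bound.

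In the complementary case $\limsup_{T}\|\Phi(T,0)\|<\infty$, together with $\|\Phi(T,0)\|\not\to0$ and Assumption~\ref{assum:system_dynamics}.\ref{assum:non_chaotic_systems} this forces $\ell:=\liminf_{T}\|\Phi(T,0)\|>0$ and $h:=\sup_{k\ge0}\|\Phi(k,0)\|<\infty$. Here I would excite a ``resonant'' disturbance that pushes $\|x_T^{\mu}\|$ to order $T$, which already makes $\sum_{t=0}^{T}\|x_t^{\mu}\|^2$ of order $T^2$. Concretely, take $x_0=0$, so $x_T^{\mu}=\sum_{k=1}^{T}\Phi(T,k)w_{k-1}$; choose the coordinate $i$ maximizing $\sum_{k=1}^{T}\|\Phi(T,k)^{\top}e_i\|$ and set $w_{k-1}=W\,\Phi(T,k)^{\top}e_i/\|\Phi(T,k)^{\top}e_i\|$ (any vector of norm $\le W$ if the denominator vanishes). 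Then $\|x_T^{\mu}\|\ge e_i^{\top}x_T^{\mu}=W\sum_{k=1}^{T}\|\Phi(T,k)^{\top}e_i\|\ge\tfrac{W}{n}\sum_{k=1}^{T}\|\Phi(T,k)\|$, using $\sum_i\|M^{\top}e_i\|\ge\|M\|$. The transition-matrix composition $\Phi(T,0)=\Phi(T,k)\Phi(k,0)$ gives $\|\Phi(T,k)\|\ge\|\Phi(T,0)\|/\|\Phi(k,0)\|\ge\|\Phi(T,0)\|/h$, and $\|\Phi(T,0)\|\ge\ell/2$ for $T$ large, whence $\|x_T^{\mu}\|\ge\tfrac{W\ell}{2nh}\,T$. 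Therefore $J_T(0,\boldsymbol u^{\mu};\boldsymbol w)\ge\underline M\|x_T^{\mu}\|^2$ is of order $T^2$, while linear regret plus the benchmark bound would force it to be of order $T$; letting $T\to\infty$ gives the contradiction and completes the contrapositive.

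Closing the chains of inequalities and tracking the explicit constants is routine. The main obstacle is the complementary case: one must certify that the transition maps $\Phi(T,k)$ do \emph{not} contract — this is exactly where Assumption~\ref{assum:system_dynamics}.\ref{assum:non_chaotic_systems} and the composition identity enter, ruling out the oscillatory possibility $\liminf\|\Phi(T,0)\|=0<\limsup$ — and one must cope with the fact that the most excitable direction of $\Phi(T,k)$ varies with $k$, which the coordinate-averaging step handles at the cost of a factor $1/n$. A lesser technical point is the uniform $O(T)$ bound on the benchmark cost, which is the role of Assumption~\ref{assum:system_dynamics}.\ref{assum:stabilizability}.
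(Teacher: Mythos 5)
Your proof is correct and follows the same overall skeleton as the paper's: bound the benchmark cost by $O(T)$ via the exponentially stabilizing policy of Assumption~\ref{assum:system_dynamics}.\ref{assum:stabilizability} (the paper does this by invoking Theorem~\ref{the:summable_transiton_linear_regret}), convert linear regret into a uniform bound $\sum_t\|x_t^{\mu}\|^2\le a+bT$ via the lower bound in Assumption~\ref{assum:cost_functions}, and then split on the behaviour of $\|\Phi(T,0)\|$ as dictated by Assumption~\ref{assum:system_dynamics}.\ref{assum:non_chaotic_systems}. Where you genuinely diverge is in the adversarial constructions inside each case. In the divergent case, the paper asserts the existence of a single direction $\bar x_0$ with $\liminf_T\|\Phi(T,0)\bar x_0\|=\infty$ and applies Stolz--Ces\`aro along that trajectory; you instead sum the energy estimate over the $n$ basis initial conditions and pass through the Frobenius norm, which sidesteps the (nontrivial) issue that the worst-case direction of $\Phi(T,0)$ may rotate with $T$. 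In the bounded case, the paper drives the system with the forward-aligned disturbance $w_{k-1}=C\Phi(k,0)\overline w_0$, producing $x_t=Ct\,\Phi(t,0)\overline w_0$ and a cumulative cost $\sim\sum_t t^2\|\Phi(t,0)\overline w_0\|^2$, which requires $\|\Phi(t,0)\overline w_0\|$ not to decay for the fixed vector $\overline w_0$; you instead use a horizon-dependent disturbance aligned with $\Phi(T,k)^{\top}e_i$, lower-bound only the terminal cost by order $T^2$ via the composition identity $\|\Phi(T,k)\|\ge\|\Phi(T,0)\|/\|\Phi(k,0)\|$, and absorb the rotating-direction problem into a coordinate-averaging factor $1/n$. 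Your constructions are operator-norm based throughout and hence somewhat more robust (they never rely on a fixed vector staying excited), at the modest cost of dimension-dependent constants and a $T$-dependent disturbance sequence, which is admissible since the regret bound must hold for every horizon and every admissible disturbance separately.
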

\begin{proof}
Given Assumption \ref{assum:system_dynamics}.\ref{assum:stabilizability} holds, there exists a policy $\overline{\mu}_t = -\overline{K}_tx_t$ such that the closed-loop system \eqref{eq:closed_loop_system} under this policy is exponentially stable. It can be shown that this implies BIBS stability \citep{ludyk2013stability}, and   the bound in \eqref{eq:summable_sum_state_transition} is always satisfied. Specifically, from the  definition of exponential stability, it can be inferred that taking $t_0 = 0$ and $D =\frac{d}{1-\delta}$ the bound in \eqref{eq:summable_sum_state_transition} is achieved. Therefore, from Theorem \ref{the:summable_transiton_linear_regret} there exist some $C_w^{\star},C_0^{\star}\in \mathbb{R}_+$, such that
\begin{equation*}
     J_T(x_0,\boldsymbol{u^{\star}};\boldsymbol{w})\leq J_T(x_0,\boldsymbol{u}^{\overline{\mu}};\boldsymbol{w})\leq  C_0^{\star} +C_w^{\star}T, 
\end{equation*}
where the first inequality follows from the definition of $\boldsymbol{u^{\star}}$. Then, linear regret of any policy $\mu$ is equivalent to
\begin{equation}
    J_T(x_0,\boldsymbol{u}^{\mu};\boldsymbol{w}) \leq  \overline{C}_0 + \overline{C}_wT,
    \label{eq:regret_bound_time_varying_2}
\end{equation}
where $\overline{C}_w = (C_w^{\star}+C_w)$ and $\overline{C}_0 = (C_0^{\star}+C_0)$. Assume, for the sake of contradiction,  that there exists a policy $\mu_t(x_t) = -K_tx_t$, that attains the bound in \eqref{eq:regret_bound_time_varying_2} but the associated closed-loop system \eqref{eq:closed_loop_system} is not asymptotically stable. 
This is equivalent (Lemma \ref{lem:asymptotic_stability} and Assumption \ref{assum:system_dynamics}.\ref{assum:non_chaotic_systems}) to assuming that either the limit in \eqref{eq:transition_matrix_limit} does not exist or it is greater than zero, or equivalently
\begin{equation}
 \limsup_{T\rightarrow\infty}\|\Phi(T,0)\| \geq \liminf_{T\rightarrow\infty}\|\Phi(T,0)\| > 0.
 \label{eq:non_asymptotic_assumption}
\end{equation}
Using the lower bound in Assumption \ref{assum:cost_functions}, we then have
\begin{align*}
   J_T(x_0,\boldsymbol{u}^{\mu};\boldsymbol{w}) &=  \sum_{t=0}^Tc_t(x_t,K_tx_t) \geq \underline{M}\sum_{t=0}^T\|x_t\|^2\\
   &= \underline{M}\sum_{t=0}^T\|\Phi(t,0)x_0 + \sum_{k=1}^t\Phi(t,k)w_{k-1}\|^2.
\end{align*}
Consider first the case when $\liminf\limits_{T\rightarrow\infty}\|\Phi(T,0)\| = \infty$. Then, necessarily there exists a $\bar{x}_0 \neq \boldsymbol{0}$, such, that $\liminf\limits_{T\rightarrow\infty}\|\Phi(T,0)\bar{x}_0\| = \infty$. For $x_0=\bar{x}_0$ and $w_t=\boldsymbol{0}$ for all $t\in\mathbb{N}$, one can take the time-averaged limit of the above
\begin{align*}
    \liminf_{T\rightarrow \infty}\frac{1}{T}J_T(\bar{x}_0,\boldsymbol{u}^{\mu};\boldsymbol{w})&\geq \liminf_{T\rightarrow \infty} \frac{\underline{M}}{T}\sum_{t=0}^T\|\Phi(t,0)\bar{x}_0\|^2\\
    &\geq \liminf_{T\rightarrow \infty} \underline{M}\|\Phi(T,0)\bar{x}_0\|^2 = \infty,
\end{align*}
where the last inequality follows from the Stolz-C\'esaro theorem \citep{choudary2014real}. This leads to a contradiction with the upper bound in \eqref{eq:regret_bound_time_varying_2}
\begin{equation*}
    \liminf_{T\rightarrow \infty}\frac{1}{T}J_T(\bar{x}_0,\boldsymbol{u}^{\mu};\boldsymbol{w})\leq \overline{C}_w.
\end{equation*}

Given Assumption \ref{assum:system_dynamics}.\ref{assum:non_chaotic_systems}, the only other remaining case is when the upper and lower limits in \eqref{eq:non_asymptotic_assumption} are finite (whether equal or not) and non-zero. For this case, consider $x_0 = \boldsymbol{0}$ and $w_{k-1} = C\Phi(k,0)\overline{w}_0$, where
\begin{equation*}
    C =   
   \min_{k\geq0} \left[ \frac{W}{\|\Phi(k,0)\overline{w}_0\|}\right],
\end{equation*}
and the vector $\overline{w}_0 \in \mathbb{R}^n$ is non-zero. Since $\limsup\limits_{T\rightarrow\infty}\|\Phi(T,0)\|$ is finite, such a $C\in\mathbb{R}_+$ exists. Then
\begin{align*}
     &J_T(x_0,\boldsymbol{u}^{\mu};\boldsymbol{w}) \geq\underline{M}C^2\sum_{t=0}^T\|\sum_{k=1}^t\Phi(t,k)\Phi(k,0)\overline{w}_0\|^2\\
     & = \underline{M}C^2\sum_{t=1}^T\|\sum_{k=1}^t\Phi(t,0)\overline{w}_0\|^2 = \underline{M}C^2\sum_{t=1}^Tt^2\|\Phi(t,0)\overline{w}_0\|^2.
\end{align*}

By taking the time-averaged limits of both sides of the inequality and applying the Stolz-C\'esaro theorem again
\begin{align*}
    \liminf_{T\rightarrow \infty}\frac{1}{T}J_T(\bar{x}_0,\boldsymbol{u}^{\mu};\boldsymbol{w})\geq \underline{M}C^2\liminf_{T\rightarrow\infty}T^2\|\Phi(T,0)\overline{w}_0\|^2.
\end{align*}
The lower bound becomes unbounded, contradicting \eqref{eq:regret_bound_time_varying_2} and completing the proof by contraposition.
\end{proof}
Thus, under suitable conditions, linear regret guarantees asymptotic stability. However, as shown in Theorem \ref{the:summable_transiton_linear_regret} and also pointed out in \citep{nonhoff2022relation} for a similar setting, the converse is not always the case, i.e. asymptotic stability alone  does not imply linear regret.
\begin{remark}
The results can be generalized to affine policies of the form $\mu(x_t) = -K_tx_t + d_t$, for some $d_t \in \mathbb{R}^m$, given that $d_t$ are bounded for all $t\in\mathbb{N}$.
\end{remark}
\begin{remark}
Note that inferring stability from regret requires not only stabilizability (Assumption \ref{assum:system_dynamics}.\ref{assum:stabilizability}) but also certain conditions on the benchmark. Indeed to guarantee (\ref{eq:regret_bound_time_varying_2}), it is necessary that a stabilizing policy exists in the feasible space of the benchmark. This holds here since the benchmark is the non-causal controller $\boldsymbol{u^{\star}}$ (dynamic regret), but weaker benchmarks that are constrained to specific policy classes can also be considered (policy regret). 
\end{remark}

\subsection{Linear Time-Invariant Systems}

In this section, we consider the optimal control problem of minimizing the cost \eqref{eq:cost_function} subject to the LTI dynamics
\begin{equation}
    x_{t+1} = Ax_t +Bu_t +w_t,
    \label{eq:linsys}
\end{equation}
where $A\in \mathbb{R}^{n \times n}$, $B\in \mathbb{R}^{n\times m}$. The following establishes the relation between linear regret and stability in this setting. 

\begin{theorem}
Assume that Assumptions \ref{assum:system_dynamics}.\ref{assum:stabilizability},  and \ref{assum:cost_functions} hold. Given the cost function \eqref{eq:cost_function} and LTI system \eqref{eq:linsys}, a linear stationary state feedback policy $\mu(x_t) = -Kx_t$ attains linear regret if and only if the closed-loop system $x_{t+1} =Fx_t$ under this policy is asymptotically stable.
\label{the:time_invariant_regret}
\end{theorem}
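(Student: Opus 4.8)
The plan is to prove the two implications separately, leveraging the LTV results already established, and noting that for LTI systems all the limits involved are genuine limits (Assumption~\ref{assum:system_dynamics}.\ref{assum:non_chaotic_systems} holds automatically, since $\|\Phi(T,0)\| = \|F^T\|$ always has a well-defined behavior governed by $\rho(F)$).

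First, for the direction ``asymptotic stability $\Rightarrow$ linear regret'': if $x_{t+1} = Fx_t$ is asymptotically stable, then by Lemma~\ref{lem:asymptotic_stability} (using Assumption~\ref{assum:system_dynamics}.\ref{assum:full_rank_system_matrix}, which here just says $F$ is invertible) we have $\|F^T\| \to 0$, hence $\rho(F) < 1$, so the system is in fact exponentially stable: there exist $d \in \mathbb{R}_+$ and $\delta \in [0,1)$ with $\|F^t\| = \|\Phi(t,0)\| \le d\delta^t$. This gives $\sum_{t=0}^T \|\Phi(t,0)\| \le d/(1-\delta) =: D < \infty$, so condition~\eqref{eq:summable_sum_state_transition} holds; exponential stability also implies BIBS stability (since $\|\Phi(t,k)\| = \|F^{t-k}\| \le d\delta^{t-k}$, the BIBS sum is bounded by $d/(1-\delta)$). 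Then Theorem~\ref{the:summable_transiton_linear_regret} applies directly and yields linear regret. A subtlety to address: Theorem~\ref{the:summable_transiton_linear_regret} needs $\rho(F)<1$ to be \emph{inferred} from asymptotic stability, which requires the full-rank assumption; I would state this dependence explicitly, or alternatively note that if $F$ is not invertible but still asymptotically stable (deadbeat-type), the conclusion $\|F^t\|\to 0$ and hence summability still holds because eventually the nilpotent-like decay dominates — but cleanest is to just invoke $\rho(F)<1 \Leftrightarrow$ asymptotic stability for LTI, which is standard and does not even need Assumption~\ref{assum:system_dynamics}.\ref{assum:full_rank_system_matrix}.

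Second, for the direction ``linear regret $\Rightarrow$ asymptotic stability'': this is essentially a specialization of Theorem~\ref{the:regret_implying_stability_tv}, whose hypotheses are Assumptions~\ref{assum:system_dynamics} and~\ref{assum:cost_functions}. The only gap is that the LTI theorem statement assumes only Assumption~\ref{assum:system_dynamics}.\ref{assum:stabilizability} (plus~\ref{assum:cost_functions}), not the full Assumption~\ref{assum:system_dynamics}. So I must argue that parts~\ref{assum:non_chaotic_systems} and~\ref{assum:full_rank_system_matrix} are either automatic or dispensable in the LTI case. Part~\ref{assum:non_chaotic_systems} is automatic: $\|F^T\|$ always has a limit in $\mathbb{R}\cup\{\infty\}$ when $\rho(F)\neq 1$, and when $\rho(F)=1$ one can still run the contradiction argument directly (see below). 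For part~\ref{assum:full_rank_system_matrix}: if $\rho(F)<1$ we are done regardless; if $\rho(F)\ge 1$ we want a contradiction with linear regret. Pick an eigenvalue $\lambda$ of $F$ with $|\lambda| = \rho(F) \ge 1$ and a corresponding (real) invariant direction; setting $w_t = 0$ and $x_0$ in that direction (if $|\lambda|>1$, or using the generalized-eigenvector/Jordan-block polynomial growth if $|\lambda|=1$ with a nontrivial block), the state does not decay, so $\frac1T\sum_{t=0}^T \|x_t\|^2$ stays bounded away from zero or diverges, and by the lower bound in Assumption~\ref{assum:cost_functions} together with the Stolz--Ces\`aro argument exactly as in the proof of Theorem~\ref{the:regret_implying_stability_tv}, $\frac1T J_T \not\to$ bounded, contradicting linear regret. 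The genuinely delicate case is $\rho(F) = 1$ with $F$ diagonalizable on the unit circle (no polynomial growth): then $\|x_t\|$ is merely bounded below by a positive constant along the eigendirection, $\frac1T\sum \|x_t\|^2$ converges to a positive constant, which still contradicts \emph{sublinear} regret but \emph{not} linear regret as defined in Definition~\ref{def:linear_regret} — so one needs the disturbance-driven construction from the second half of the proof of Theorem~\ref{the:regret_implying_stability_tv} (take $x_0=0$, $w_{k-1} = C\,\Phi(k,0)\bar w_0$, producing $\|x_t\|^2 \propto t^2\|\Phi(t,0)\bar w_0\|^2$ which grows at least linearly when $\|\Phi(t,0)\bar w_0\| \not\to 0$), giving $\frac1T J_T \to \infty$ and the desired contradiction.

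I expect the main obstacle to be bookkeeping rather than a deep idea: reconciling the weaker assumption set of Theorem~\ref{the:time_invariant_regret} with the stronger Assumption~\ref{assum:system_dynamics} used by Theorem~\ref{the:regret_implying_stability_tv}, and in particular handling the marginal case $\rho(F)=1$ cleanly — showing that even a purely oscillatory (diagonalizable, unit-modulus) closed loop fails linear regret via the disturbance amplification construction. Once that case is dispatched, the proof reduces to: ($\Leftarrow$) invoke Theorem~\ref{the:summable_transiton_linear_regret} after upgrading asymptotic stability to exponential stability; ($\Rightarrow$) invoke (the argument of) Theorem~\ref{the:regret_implying_stability_tv}, checking that its two case distinctions cover all $F$ with $\rho(F)\ge 1$.
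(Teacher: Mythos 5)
Your proposal is correct and follows essentially the same route as the paper: the sufficiency direction is the geometric-decay bound $\|F^k\|\le g\varepsilon^k$ with $\varepsilon=\tfrac{1+\rho(F)}{2}<1$ (the paper redoes the computation of Theorem~\ref{the:summable_transiton_linear_regret} directly rather than citing it, which is only a presentational difference), and the necessity direction is the same contradiction via a resonant disturbance along the dominant eigendirection ($x_0=\boldsymbol{0}$, $w_t=\overline{w}$ with $F'\overline{w}=\rho(F')\overline{w}$), which yields cost growth of order $T^2$ and covers the marginal case $\rho(F')=1$ exactly as you anticipated. Your extra care with complex eigenvalues and Jordan blocks is a refinement the paper glosses over, but it does not change the argument.
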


\begin{proof}
To show that the required regret bound is achieved for an asymptotically stabilizing  stationary policy, consider the state at timestep $0\leq t <T$, given by
\begin{equation*}
    x_t = F^tx_0 + \sum_{k=0}^{t-1}F^kw_{t-k-1},
\end{equation*}
taking $w_{-1}=\boldsymbol{0}$. Using the cost bounds in Assumption \ref{assum:cost_functions} and denoting $M:=\overline{M}(1+\|K\|^2)$
\begin{align*}
    &\sum_{t=0}^{T}c_t(x_t,Kx_t) \leq \sum_{t=0}^{T}\overline{M}\left(\|x_t\|^2+\|K\|^2\|x_t\|^2 \right)\\
    &= M \sum_{t=0}^T\|F^tx_0 + \sum_{k=0}^{t-1}F^kw_{t-k-1}\|^2\\
    & \leq 2Mg^{2}\sum_{t=0}^T\left(\varepsilon^{2t}\|x_0\|^2 + W^{2} \left(\frac{1-\varepsilon^{t}}{1-\varepsilon}\right)^2\right)\\
    &\leq\frac{ 2Mg^2}{1-\varepsilon^2}X^2 + \frac{2Mg^2}{(1-\varepsilon)^2}W^2T,
\end{align*}

where we used the fact that for $\rho(F)<1$  there exists a $g\in \mathbb{R}_+$, such that $\|F^k\|\leq g \varepsilon^k$ for all $k>0$, where $\varepsilon:=\frac{1+\rho(F)}{2} \in (0,1)$ \citep{horn2012matrix}. Since costs are non-negative, regret attains the same bound.

To prove the reverse statement, assume, for the sake of contradiction, that there exists a  matrix $K'$ such that $\rho(A-BK')\geq1$ attaining linear regret. From the previous analysis, any stabilizing  state feedback matrix $K$ attains a cost that scales linearly with the time horizon. Moreover, such a matrix exists as for LTI systems Assumption \ref{assum:system_dynamics}.\ref{assum:stabilizability} corresponds to the stabilizability of the pair $(A,B)$. Then, using the same arguments as in the proof of Theorem \ref{the:regret_implying_stability_tv}, there exist  $\overline{C}_w, \overline{C}_0 \in \mathbb{R}_+$  such that
\begin{equation}
    \sum_{t=0}^Tc_t(x_t,K'x_t) \leq  \overline{C}_0 + \overline{C}_wT.
    \label{eq:regret_bound_contradiction}
\end{equation}
From Assumption \ref{assum:cost_functions} it is true that
\begin{equation*}
    \sum_{t=0}^{T} c_t(x_t,K'x_t) \geq   \underline{M}\sum_{t=0}^T\|F'^tx_0 + \sum_{k=0}^{t-1}F'^kw_{t-k-1}\|^2.
\end{equation*}

Since the result should hold for any initial state and any disturbance within the defined Euclidian ball, consider $x_0 = \boldsymbol{0}$ and $w_t = \overline{w} \quad\forall t\in\mathbb{N}$ such that $\|\overline{w}\|= W$ and $F'\overline{w} = \rho(F')\overline{w}$. We then have
\begin{align*}
    & \underline{M}\sum_{t=0}^T\|\sum_{k=0}^{t-1}F'^kw_{t-k-1}\|^{2} =   \underline{M}\sum_{t=0}^T\|\sum_{k=0}^{t-1}F'^k\overline{w}\|^{2} \\
    &= \underline{M}\sum_{t=0}^T\|\sum_{k=0}^{t-1}\rho(F')^k\overline{w}\|^{2} \geq  \underline{M}\sum_{t=0}^Tt^{2}W^{2} \geq  \underline{M}W^{2}\frac{T^2+T}{2},
\end{align*}
leading  to a contradiction with \eqref{eq:regret_bound_contradiction}.
\end{proof}
In contrast to the LTV case, in the LTI setting, asymptotic and exponential stability are equivalent, leading to an equivalency between linear regret and stability.

\section{Numerical Example}

To visualize the necessary and sufficient condition in Theorem \ref{the:time_invariant_regret},  a simple two-dimensional system with single input is considered. In particular, for $A = [ 1\quad 1; 0\quad 1]$ and $B = [1;0.5]$, three LTI state feedback controllers are considered, $K_1 = [ 0.2\quad 0.4]$, $K_2 = [0 \quad 1]$, and $K_3 = [-0.02 \quad 0.5]$. These produce respectively, stable, marginally stable, and unstable closed-loop systems. The cost function is taken to be quadratic with $Q = [1.5 \quad 0; 0 \quad 1.5]$ and $R = 1$ as the state and input weighting matrices, respectively. The disturbance $w_t$ for all $0\leq t<T$ is taken to be the normalized eigenvector of the closed-loop system matrix corresponding to the largest eigenvalue, to approximate the worst-case regret for the given policy. The time-averaged regret for each of the controllers is calculated for a time horizon ranging from $1$ to $100$ and is plotted in a logarithmic scale in Figure \ref{fig:LTI_plots}. Specifically, the average regret  of  the stable controller can be upper bounded by a constant, that of the marginally stable controller scales with $T$ and for the unstable one with  a higher order of $T$.

\begin{figure}
\begin{center}
\includegraphics[width=8.4cm]{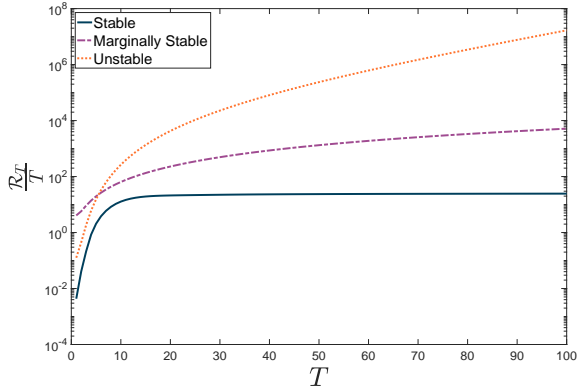}    
\caption{Time-averaged regret, $\frac{\mathcal{R}_T}{T}$ for a LTI system on a semilog scale. The stable system can be upper bounded by a constant while the marginally stable and unstable ones scale with an order of $log(T)$.} 
\label{fig:LTI_plots}
\end{center}
\end{figure}

\section{Conclusions}
In this work, we studied the interconnection of the notion of regret coming from online optimization and the control theoretic concept of stability. Given a linear state feedback policy that attains linear regret, and certain upper and lower bounds on the objective stage costs, we show that the closed-loop system is necessarily  asymptotically stable, both for the time-varying and time-invariant cases. The converse result also holds given that the closed-loop system is BIBS stable and has absolute summable norms of its state transition matrices. The results can be used to directly prove the stability of algorithms with regret guarantees and vice versa.  This work can be a stepping stone for the consideration of adaptive policies, under which the considered setting is no longer linear; this will allow the analysis of a wider range of online algorithms.


\bibliography{ifacconf}             
                                                   







\end{document}